\date{}
\title{Simpler Sequential and Parallel Biconnectivity Augmentation}
\author{Surabhi Jain and N.Sadagopan} 
\institute{Department of Computer Science and Engineering,\\ Indian Institute of Information Technology, Design and Manufacturing, Kancheepuram, Chennai, India. \\
\email{\{surabhijain,sadagopan\}@iiitdm.ac.in}}
\begin{document}
\maketitle
\begin{abstract}
For a connected graph, a vertex separator is a set of vertices whose removal creates at least two components and a minimum vertex separator is a vertex separator of least cardinality.   The vertex connectivity refers to the size of a minimum vertex separator.  For a connected graph $G$ with vertex connectivity $k~ (k \geq 1)$, the connectivity augmentation refers to a set $S$ of edges whose augmentation to $G$ increases its vertex connectivity by one.  A minimum connectivity augmentation of $G$ is the one in which $S$ is minimum.  In this paper, we focus our attention on connectivity augmentation of trees. Towards this end, we present a new sequential algorithm for biconnectivity augmentation in trees by simplifying the algorithm reported in \cite{nsn}.  The simplicity is achieved with the help of edge contraction tool.  This tool helps us in getting a recursive subproblem preserving all connectivity information.  Subsequently, we present a parallel algorithm to obtain a minimum connectivity augmentation set in trees.  Our parallel algorithm essentially follows the overall structure of sequential algorithm.  Our implementation is based on CREW PRAM model with $O(\Delta)$ processors, where $\Delta$ refers to the maximum degree of a tree.  We also show that our parallel algorithm is optimal whose processor-time product is $O(n)$ where $n$ is the number of vertices of a tree, which is an improvement over the parallel algorithm reported in \cite{hsu}.  
%As an application, using our approach, we present a parallel algorithm for biconnectivity augmentation of singly connected graphs (graphs whose vertex connectivity is one).
\end{abstract}
\section{Introduction}
Connectivity augmentation is a classical combinatorial optimization that finds application in the study of resilent computer networks.  The parameter connectivity models the {\em robustness} of a network and increasing the robustness by one by adding a minimum number of links is a fundamental problem in the study of reliable computer networks.  The study of connectivity augmentation was initiated by Tarjan et al. in \cite{tarjan} and subsequently it has attracted many researchers in the past.   In \cite{tarjan}, combinatorial study on biconnectivity augmentation of singly connected graphs (graphs with vertex connectivity one) is presented.  Subsequently, Hsu et al. in \cite{hsu} presented a linear-time sequential algorithm for connectivity augmentation of singly connected graphs.  Connectivity augmentation of 2-connected graphs and 3-connected graphs were reported in \cite{a,b}.  Vegh in \cite{vegh} settled the computational complexity of connectivity augmentation of $k$-connected graphs for any $k \geq 1$ which was open for almost 35 years.   The algorithm reported in \cite{vegh} runs in $O(n^7)$.  The complexity of general connectivity augmentation which asks for increasing the vertex connectivity to $l$ given a graph with vertex connectivity $k$ is still open.  From the lower-bound theory perspective, any biconnectivity augmentation needs at least $O(n)$ computations to output a minimum biconnectivity augmentation set.  This fact also implies that the algorithm reported in \cite{tarjan,hsu} are optimal.  One way to speed up sequential computation is to design parallel algorithms.   Since almost all modern day computers are at least Quad core with at least four active threads, it is natural to think of parallel algorithms for combinatorial problems.  The only available result in the literature is for parallel biconnectivity augmentation by Hsu et al. \cite{hsu}.  The algorithm reported in \cite{hsu} uses $O(n)$ processors with each processor takes $O(log^2n)$ time.  Hence, the overall run-time of parallel biconnectivity augmentation reported in \cite{hsu} incurs $O(n.log^2n)$ with CREW strategy.   In parallel algorithmics setting, an algorithm is optimal if its process-time product equals the sequential lower bound.   In this paper, we propose an optimum parallel biconnectivity augmentation algorithm with $O(\Delta)$ processors which is an improvement over \cite{hsu}, where $\Delta$ refers to the maximum degree of a graph.  Our algorithm uses $O(\Delta)$ processors with each processor takes $O(\frac{n}{\Delta})$ time.  To achieve this, we propose a reduction technique for reducing the given graph using the operation {\em edge contraction}.  This reduction tool yields a simple and an elegant sequential and parallel algorithm for biconnectivity augmentation.  We believe that this reduction tool can be used in other combinatorial problems for preprocessing the input graph preserving all its connectivity information.       \\ \\
{\bf Past Results on Edge Contraction:} Edge contraction is a very popular operation in Graph Theory and related areas.  It is one of the two operations used to define the minor of a graph: $H$ is a {\em minor} of $G$ if $H$ can be obtained from $G$ by a sequence of edge deletions or edge contractions.    Forbidden minors are used to obtain characterization of many classes of graphs specified by representation or structural properties. A classic example is the forbidden minor characterization of planar graphs.  Edge contraction and in general clique contraction plays a significant role in the proof of Perfect Graph Theorem \cite{golu}, and in other structural results \cite{matthias}.  The basic idea exploited in randomized algorithms for min-cut is that contracting a randomly chosen edge does not increase the size of the min-cut \cite{karger}.  This leads to expected polynomial time algorithms for min-cut, and these algorithms are fundamentally different from the classical max-flow based techniques.  Edge contraction and vertex connectivity are well studied in the literature, for example, in finding a structural characterization of contractible edges.  In this paper, we use edge contraction as an operation to reduce the size of the graph preserving its connectivity information.  We believe that this reduction tool can be used in other combinatorial problems related to vertex connectivity.  In our work, this tool is quite powerful as for certain inputs it reduces the size of the tree from $O(n)$ to $O(\Delta)$.\\ 
{\bf Our Results:} We first present a simple sequential algorithm for biconnectivity augmentation by modifying the algorithm presented in \cite{nsn} using {\em edge contraction} tool.  This new approach yields a simple implementation although the run-time is still linear in the input size.  We next present a parallel biconnectivity augmentation algorithm using $O(\Delta)$ processors.  Our parallel algorithm is optimal, i.e. the processor-time product is $O(n)$ which is an improvement over the parallel algorithm reported in \cite{hsu}.
\section{Preliminaries}
\subsection{Connectivity Augmentation Preliminaries}
Notation and definitions are as per \cite{golu,west}.  Let $G =(V,E)$ be an undirected connected graph where $V(G)$ is the set of vertices and $E(G) \subseteq \{\{u,v\}~|~ u,v \in V(G)$, $u \not= v \}$. For $v \in V(G)$, $d_G(v)$ refers to the degree of $v$ in $G$.  $\delta(G)$ and $\Delta(G)$ refers to the minimum and maximum degree of $G$, respectively.  For simplicity, we use $\delta$ and $\Delta$ when the associated graph is clear from the context. For $S \subset V(G)$, $G[S]$ denotes the graph induced on the set $S$ and $G \setminus S$ is the induced graph on the vertex set $V(G) \setminus S$.  A vertex separator of a graph $G$ is a set $S \subseteq V(G)$ such that $G \setminus S$ has more than one connected component.  A minimum vertex separator $S$ is a vertex separator  of least size and the cardinality of such $S$ is the vertex connectivity of a graph $G$, written  $\kappa(G)$.  A graph is $k$-vertex connected if $\kappa(G)=k$.  If $\kappa(G)=1$ then the graph is 1-connected (also known as singly connected) and in such a graph a minimum vertex separator $S$ is a singleton set and the vertex $v \in S$ is a {\em cut-vertex} of $G$. We let $G \cdot e$ denote the graph obtained by contracting an edge $e=\{u,v\}$ in $G$ such that $V(G \cdot e)=V(G)\setminus\{u,v\}\cup\{z_{uv}\}$ and $E(G \cdot e)=\{ \{z_{uv},x\} ~|~ \{u,x\}$ or $\{v,x\} \in E(G) \cup \{x,y\} ~|~ x \not= u, y \not= v$ in $E(G) \}$.  For a graph $G$ with $\kappa(G)=k$, a minimum connectivity augmentation set $E_{ca} = \{\{u,v\} ~|~ u,v \in V(G) \mbox{ and } \{u,v\} \notin E(G) \}$ is such that $G$ augmented with $E_{ca}$ ($G'=G + E_{ca}$) is of vertex connectivity $k+1$.  The path between $u$ and $v$ is denoted as $P_{uv}$ and is defined on the vertex set $V(P_{uv})= \{u=u_1,\ldots,u_r=v\}$ such that $E(P_{uv})=\{\{u_i,u_{i+1}\} \in E(G), 1 \leq i \leq r-1\}$.   We use two kinds of paths in our work.  $P1=\{P_{uv} ~|~ P_{uv}=\{u=u_1,\ldots,u_r=v\}$ such that $d_G(u) \geq 3$ and $d_G(v) \geq 3\}$. $P2=\{P_{uv} ~|~ P_{uv}=\{u=u_1,\ldots,u_r=v\}$ such that $d_G(u) \geq 3$ and $d_G(v)=1\}$.  For a tree $T$,  $P_{uv} \cdot uv$ denotes the contraction of $P_{uv}$ into an edge $\{u,v\}$, i.e., $V(T)$ becomes $V(T) \setminus \{u_2,\ldots,u_{r-1}\}$ and $E(T)$ becomes $(E(T) \setminus E(P_{uv})) \cup \{u,v\}$.
\subsection{Parallel Computing Preliminaries}
In this paper, we work with Parallel Random Access Machine (PRAM) Model.  It consists of a set of $n$ processors all connected to a shared memory.  The time complexity of a parallel algorithm is measured using $O($number of processors $\times$ time for each processor$)$.  This is also known as processor-time product.  Access policy must be enforced when two processors trying to Read/Write  into the same cell. This can be resolved using one of the following strategies:
\begin{itemize}
\item Exclusive Read and Exclusive Write (EREW): Only one processor is allowed to read/write a cell
\item Concurrent Read and Exclusive Write (CREW): More than one processor can read a cell but only one is allowed to write at a time
\item Concurrent Read and Concurrent Write (CRCW): All processors can read and write a cell at a time.
\end{itemize}
In our work, we restrict our attention to CREW PRAM model.  For a problem with input size $N$ and $P$ processors,  the speedup is defined as $S_p(N)= \frac{T_1(N)}{T_p(N)}$, where $T_i(N)$ is the time taken on a problem size $N$ with $i$ processors.  The efficiency is defined as $E_p(N)=\frac{S_p(N)}{P}$.
\section{Biconnectivity Augmentation in Trees}
\label{treeaugment}
In this section, we first present a new sequential approach for biconnectivity augmentation in trees.  We discuss lower bound analysis, the sequential algorithm followed by a proof of correctness.  In the subsequent section, we present a parallel algorithm for biconnectivity augmentation.  Our parallel algorithm is a simple one which naturally results from the sequential approach.  Our sequential algorithm runs in $O(n)$ time and the parallel algorithm is optimal with processor-time product is $O(\Delta.\frac{n}{\Delta})$.   Parallel strategy uses $O(\Delta)$ processors with CREW strategy.
\subsection{Biconnectivity Augmentation: A New Sequential Approach}
Given a tree, this section presents a new approach to find an optimum augmenting set which makes the tree biconnected.  We first discuss the lower bound on the size of optimum augmenting set, followed by a new proof of tightness.  Our algorithm uses the operation edge contraction to simplify the input graph and this tool yields a linear time algorithm with a simple implementation. 
This new framework also guarantees a tree at each iteration and hence we obtain a recursive sub-problem efficiently.  For a tree $T$, a {\em representative} is a vertex of degree at least 3 such that it has a leaf as its child.  Let $R=\{R_1,\ldots,R_p\}$ denotes the set of representatives and $L=\{L_1,\ldots,L_p\}$ denotes the set of leaves associated with $R$ such that $L_i$ represents the set of leaves associated with $R_i$.  i.e., $L_i$ contains the set of leaves which have a common parent in $T$. Observe that $L$ partitions the set of leaves in $T$.
\subsection{Lower Bound on Biconnectivity Augmentation in trees}
Given a tree $T$ we now present the lower bound on optimum biconnectivity augmenting set.  It is a well-known fact that in any 2-connected graph, for any pair of vertices, there exists two vertex disjoint paths between them.  This fact is useful in determining the lower bound on the optimum biconnectivity augmenting set.  Let $l$ denote the number of leaves in $T$.  Clearly, to biconnect $T$ we must augment at least $\lceil \frac{l}{2} \rceil$ edges.  Another lower bound is due to the number of components created by removing a cut vertex of $T$.  Note that the number of components created by a cut vertex $x$ of $T$ is precisely the degree of $x$ in $T$.   This shows that in any biconnectivity augmentation of $T$, for each cut vertex $x$, one must find at least $deg_T(x)-1$ new edges in the augmenting set.  Therefore, we must augment at least $\Delta(T)-1$ edges to biconnect $T$.  Therefore, by combining the two lower bounds, the number of edges to biconnect $T$ is at least $\max\{\lceil \frac{l}{2} \rceil, \Delta(T)-1\}$.  This lower bound is indeed tight as shown in \cite{tarjan}.  \\
In the next section, we present a new proof of tightness.  In this proof, we identify representatives for the tree at each iteration satisfying degree constraints and show that adding edges among appropriately chosen leaf pairs naturally results in a recursive sub-problem in which the lower bound value is one lesser.   The main contribution here is the use of edge contraction and the identification of two representatives which consequently guarantees easy construction of the recursive sub-problem.  This new approach yields an elegant algorithm  to compute $E_{min}(G)$, an optimal biconnectivity augmenting set.  This approach is fundamentally different from the results presented in \cite{tarjan,hsu} and similar to the approach presented in \cite{nsn}.   We further prove that the number of edges augmented by our algorithm is precisely the lower bound mentioned in this section.   \\ \\ \\
\begin{algorithm}[h]
\caption{Biconnectivity Augmentation in Trees: {\em tree-augment(Tree T)}} \label{tree-alg1}
\begin{algorithmic}
\IF{there are exactly two leaves $x$ and $y$}
\STATE Add the edge $\{x,y\}$ and return the biconnected graph
\ELSE{}
\STATE{$T_{contract}$=Perform-Path-to-Edge-Contraction($T$)}
\STATE{Compute {\em representatives} in $T_{contract}$}
\IF{there exists exactly one representative}
\STATE{\tt /*$T_{contract}$ is a star */}
\STATE{star-augment($T_{contract}$)}
\ELSE{}
\STATE{\tt /* $T_{contract}$ is not a star */}
\STATE{$T'$=non-star-augment($T$)}
\STATE{star-augment($T'$)}
\ENDIF
\ENDIF
\end{algorithmic}
\end{algorithm}
\begin{algorithm}[h]
\caption{Preprocessing using Edge Contraction {\em Perform-Path-to-Edge-Contraction(Tree T)}} \label{tree-alg2}
\begin{algorithmic}
\IF{there exists a vertex of degree exactly two in $T$}
\STATE{for each path $P_{uv}$ of type $P1$ and $P2$ in $T$, perform $P_{uv} \cdot uv$ and let the new tree be $T'$}
\ELSE{}
\STATE{Return $T$ and exit}
\ENDIF
\STATE{Return $T'$}
\end{algorithmic}
\end{algorithm}
\begin{algorithm}[h]
\caption{Biconnectivity Augmentation in non-star Trees {\tt non-star-augment(Tree T)}} \label{tree-alg3}
\begin{algorithmic}[1]
\STATE{Let $R=\{R_1,R_2,...,R_p\}$ be the set of representatives in $T$ and $D=\{d_1,d_2,...,d_p\}$ denote the degree of the representatives, respectively. Let $L_i$ denote the leaf set associated with $R_i$ and $l_i=|L_i|$}
\STATE{Find two representatives $R_i$ and $R_j$ in $R$ such that their degrees $d_i$ and $d_j$ are the maximum and the second maximum}
\IF{(($d_{i}-l_{i}==1)\&\&(d_{j}-l_{j}==$1))}
\STATE{$t=l_{j}-1$}
\ELSIF{(($d_{i}-l_{i}!=1)\&\&(d_{j}-l_{j}$==1))}
\IF{$(min(l_{i},l_{j})==l_{i})$}
\STATE {$t=l_{i}$}
\ELSE
\STATE{$t=l_{j}-1$}
\ENDIF
\ELSIF{ $((d_{i}-l_{i}==1)\&\&(d_{j}-l_{j}$ !=1))}
\STATE{$t=l_{j}$}
\ELSE
\STATE{$t=min(l_{i},l_{j})$}
\ENDIF
\STATE{Pick $X=\{x_1,\ldots,x_t\} \subset R_i$ and $Y=\{y_1,\ldots,y_t\} \subset R_j$.  Add edges $\{\{x_i,y_i\} ~|~ 1 \leq i \leq t\}$ sequentially.  Remove the set $X$ and $Y$ to get a tree for the next iteration.  Also, update the set $R$ and $D$.}
\end{algorithmic}
\end{algorithm}
\begin{theorem}
Let $T$ be a tree with $l \geq 3$ leaves.  Algorithm {\tt tree-augment()} precisely augments $max(\lceil\frac{l}{2}\rceil, \Delta(T)-1)$.
\end{theorem}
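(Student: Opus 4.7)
The plan is to prove both biconnectivity of the output and the exact edge count by induction on the quantity $M(T) := \max(\lceil l/2 \rceil, \Delta(T)-1)$. Since the preceding subsection already shows $|E_{min}(T)| \geq M(T)$, it suffices to prove that \texttt{tree-augment} augments at most $M(T)$ edges and produces a biconnected graph.

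First I would check that the preprocessing step is safe: every internal vertex of a path of type $P1$ or $P2$ has degree exactly two in $T$, so contracting such paths leaves both $l$ and $\Delta$ unchanged, giving $M(T_{contract}) = M(T)$. Moreover, any augmenting edge between two leaves $x,y$ of $T_{contract}$ creates a fundamental cycle in $T + \{x,y\}$ that passes through every internal vertex of each contracted path lying on $P_{xy}$; hence all those degree-two vertices acquire a second vertex-disjoint path to the rest of the graph and cease to be cut vertices. Consequently it suffices to biconnect $T_{contract}$, and its augmenting set lifts directly to $T$ with the same cardinality.

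The base cases are then straightforward. If $T$ has exactly two leaves, the single edge added equals $M=1$. If $T_{contract}$ has a single representative, it is a star $K_{1,\Delta}$ with $l = \Delta$, so $M = \Delta - 1 \geq \lceil l/2 \rceil$, and \texttt{star-augment}, which I would argue strings the $\Delta$ leaves along a Hamiltonian-like path, adds exactly $\Delta-1$ edges.

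The main obstacle is the inductive step, where $T_{contract}$ has at least two representatives. Here \texttt{non-star-augment} selects $R_i, R_j$ of largest and second-largest degree and adds $t$ matching edges between $X \subseteq L_i$ and $Y \subseteq L_j$, with $t$ given by a four-way case split on whether $d_i - l_i = 1$ and/or $d_j - l_j = 1$. Letting $T'$ be the tree obtained after deleting $X \cup Y$, I would show in each case that $M(T') = M(T) - t$; the inductive hypothesis applied to $T'$ then yields $M(T)-t$ further edges, for a total of $M(T)$. The bookkeeping requires tracking (i) the new leaf count $l - 2t$, augmented by one new leaf for each of $R_i, R_j$ whose non-leaf degree was $1$ and whose leaf-children are exhausted by the matching, and (ii) the new maximum degree, which drops from $d_i$ to $d_i - t$ since $R_i$ was the only vertex with degree exceeding $d_j$. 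The flags $d_i - l_i = 1$ and $d_j - l_j = 1$ are exactly what determines whether a representative degenerates into a leaf once its matched children are removed, which is the reason the algorithm replaces $t = l_j$ by $t = l_j - 1$ in those cases: one leaf child is preserved so that the recursive subproblem does not inadvertently gain a leaf and violate the $M(T') = M(T)-t$ identity. Checking these equalities in all four cases is elementary arithmetic but must be carried out carefully, and it is the technical heart of the proof; combined with Step 1 (which guarantees that the cumulative edge set biconnects $T$ because each recursive augmenting edge also spans whatever paths get newly contracted along the way), it yields the claimed bound.
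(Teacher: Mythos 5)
Your first step is fine: contraction of $P1$/$P2$ paths preserves both $l$ and $\Delta$, and since $T$ is a subdivision of $T_{contract}$, any set that biconnects the latter biconnects the former, so the lifting argument works. The genuine gap is in the inductive step, which you defer as ``elementary arithmetic'': the identity $M(T')=M(T)-t$ is the entire content of the theorem, and it does not hold in all cases, so the postponed verification cannot be completed as you describe. The one concrete claim you do make about it --- that the new maximum degree drops from $d_i$ to $d_i-t$ ``since $R_i$ was the only vertex with degree exceeding $d_j$'' --- is false when three or more representatives tie for the maximum degree: the new maximum is $\max(d_i-t,\ d_3)$ where $d_3$ is the third-largest degree, and $d_3$ may equal $d_i$. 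Concretely, let $b$ be adjacent to $a$, $c$ and two leaves, with $a$ and $c$ each carrying three leaves. Then $l=8$, $\Delta=4$, $M(T)=4$. The algorithm selects two of the three degree-$4$ representatives and adds $t=2$ edges between their leaf sets; the untouched third representative still has degree $4$, the residual tree contracts to a star $K_{1,4}$, and $M(T')=3\neq M(T)-t=2$. The algorithm then outputs $2+3=5$ edges against a lower bound of $4$ (which is achievable, e.g.\ by $\{a_1,c_1\},\{a_2,c_2\},\{a_3,b_1\},\{b_2,c_3\}$). So the step you label the ``technical heart'' is exactly where the argument breaks; a correct proof must either rule out such tie configurations or change how $t$ and the representative pair are chosen.

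For comparison, the paper does not attempt your single induction on $M(T)$ with batches of $t$ edges. It splits into two cases according to which term of the max dominates: when $\lceil l/2\rceil\geq\Delta(T)-1$ it inducts on $l$, adding \emph{one} edge at a time so that only the leaf count needs tracking, and when $\lceil l/2\rceil<\Delta(T)-1$ it argues that the maximum-degree vertex is always selected and that the strict inequality is invariant across iterations. This sidesteps your degree bookkeeping but silently assumes that case membership is preserved under recursion and that the one-edge-at-a-time analysis models the $t$-edge batches the algorithm actually performs --- assumptions the example above also violates. Your route at least localizes the difficulty to an explicit, checkable identity, but as written neither your proposal nor the deferred arithmetic behind it goes through.
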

\begin{proof}
If $T$ is a star-like tree, then {\em tree-augment} augments $\Delta(T)-1$ edges.  Otherwise there exists two representatives whose degree is at least 3 in $T$. \\
{\bf Case: $\lceil\frac{l}{2}\rceil \geq \Delta(T)-1$}.  We show by Mathematical Induction on $l$, $|E_{ca}|= \lceil\frac{l}{2}\rceil$.  {\em Base case:} $l \leq 3$.  Clearly, $T$ is star-like tree and any $E_{ca}$ requires two edges.  Therefore, the claim $|E_{ca}|=\lceil\frac{l}{2}\rceil$ follows. {\em Hypothesis:} Assume any tree on less than $l \geq 4$ leaves satisfies our claim. {\em Anchor Step:} Consider a tree on $l \geq 4$ leaves.  Since $l \geq 4$, it can not be a star-like tree as $\lceil\frac{l}{2}\rceil \geq \Delta(T)-1$.  This implies that there exists two vertices of degree at least 3 in $T$.  In particular, there exists two representatives $R_i$ and $R_j$ such that both $L_i$ and $L_j$ are non empty.  By our algorithm, we add an edge between $x \in L_i$ and $y \in L_j$ and remove $x$ and $y$ from $T$.  Let $T'$ be the tree obtained from $T$ by removing $x$ and $y$.  Clearly, $T'$ contains less than $l$ leaves and by the induction hypothesis $E_{ca}$ for $T'$ requires $\lceil\frac{l-2}{2}\rceil$ edges.  For $T$, $E_{ca}$ contains $E_{ca}$ of $T'$ plus the edge $\{x,y\}$.  Therefore, $|E_{ca}|= \lceil\frac{l-2}{2}\rceil +1 = \lceil\frac{l}{2}\rceil - 1 + 1 = \lceil\frac{l}{2}\rceil$.  Therefore, the claim follows for all tree with $l \geq 3 $ leaves. \\ \\
{\bf Case: $\lceil\frac{l}{2}\rceil < \Delta(T)-1$}.  Let $v$ be vertex whose degree is $\Delta(T)$.  We first observe that, $v$ is a representative vertex such that the associated leaf set $L_v$ is non empty.  Suppose not, then clearly the number of subtrees rooted at $v$ is strictly more than $\lceil\frac{l}{2}\rceil$.  Moreover, each subtree contains at least two leaves.  Therefore, the total number of leaves in $T$ is  strictly more than $l$, which is a contradiction.  Therefore, $L_v \not= \phi$.  By our algorithm we choose $v$ as one of the candidate representatives.  After an edge addition,  $\lceil\frac{l}{2}\rceil$ decreases by one irrespective of $\Delta(T)$ which may decrease by one.  In any case, the invariant  $\lceil\frac{l}{2}\rceil < \Delta(T)-1$ is preserved after each iteration and therefore the number of edges augmented in this case is  $\Delta(T)-1$.  Hence, the claim follows. \qed
\end{proof}
\begin{theorem}
For a tree $T$, the graph obtained from the algorithm {\tt tree-augment()} is 2-connected.
\end{theorem}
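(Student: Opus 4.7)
The plan is to prove by induction on the number of leaves $\ell(T)$ that the graph produced by {\tt tree-augment} has no cut vertex, equivalently that for every vertex $v$ the augmented graph minus $v$ remains connected. Leaves of $T$ are incident to an augmenting edge by construction and acquire degree at least $2$, so they are not cut vertices; the argument therefore reduces to verifying that every internal vertex of $T$, which is precisely what could be a cut vertex of $T$, is no longer a cut vertex after augmentation.

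First I would justify working with $T_{contract}$ instead of $T$. Any path $P_{uv}$ of type $P1$ or $P2$ consists of an initial vertex of degree $\geq 3$, possibly a terminal vertex of degree $\geq 3$ or a leaf, and all intermediate vertices of degree exactly $2$; hence every intermediate vertex is a cut vertex whose unique ``cure'' is an augmenting edge whose $T$-path crosses it. Consequently a leaf-leaf edge set is a biconnectivity augmentation of $T$ if and only if (under the same leaf labels) it is one of $T_{contract}$. Two base cases then follow quickly: if $\ell(T)=2$, the single added edge closes a cycle, which is $2$-connected; if $T_{contract}$ is a star with center $c$, {\tt star-augment} adds $\Delta-1$ edges forming a Hamiltonian cycle on the leaves, producing a wheel, well-known to be $2$-connected.

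For the inductive step I focus on the {\tt non-star-augment} branch. Let $R_i,R_j$ be the two chosen representatives, and let $t$ edges be placed between $X\subset L_i$ and $Y\subset L_j$; removing $X\cup Y$ gives a tree $T'$ with strictly fewer leaves, to which the induction hypothesis applies. I would then verify the absence of cut vertices by cases: (i) each $x\in X$ (respectively $y\in Y$) has degree $2$ in $T+E_{ca}$, namely its tree parent in $V(T')$ and its matched partner, so its deletion only removes a ``diamond'' path and preserves connectivity via the remainder $T'+E_{ca}'$; (ii) any vertex of $V(T')\setminus\{R_i,R_j\}$ inherits its non-cut status from the inductive hypothesis, and the returning pendants in $X\cup Y$ are reattached to this connected graph through their matched partners; (iii) removing $R_i$ separates each $x\in X$ from the rest of $T$ only along tree edges, but $x$ still reaches the component containing $R_j$ through its matched partner in $Y$, and symmetrically for $R_j$. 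Any leaves of $L_i$ or $L_j$ left unmatched at this iteration are handled in a later iteration, where the same argument applies.

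The main obstacle I anticipate is the bookkeeping across iterations. When {\tt non-star-augment} is re-invoked, a former representative may have lost all its pendant leaves and become a degree-$2$ vertex that is subsequently contracted, or new leaves may be exposed. I would need to show that the four-case choice of $t$ in Algorithm~3 (comparing $d_i-l_i$ and $d_j-l_j$ to $1$) exactly matches the number of leaves that must be discharged at $R_i,R_j$ before the residual problem is again a valid tree-augmentation instance, so that the induction applies cleanly. The delicate part is verifying that any leaf of $L_i$ that is postponed to a later iteration is paired with a leaf lying in a different component of $T-R_i$, which follows from the fact that $R_i$ had maximal degree and hence every subsequent iteration still preserves $R_i$ as a ``high-degree endpoint'' of the matching; once this invariant is established, the vertex-by-vertex check above lifts through all iterations and yields $2$-connectivity of the final graph.
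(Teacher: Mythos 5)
Your proof takes essentially the same route as the paper's: induction on the size of the tree with the star as base case, peeling off the matched leaf pairs between the two chosen representatives, applying the hypothesis to the residual tree $T'$, and checking that reattaching each matched pair as a path $R_i$--$x$--$y$--$R_j$ creates no cut vertex (the paper phrases this equivalently as exhibiting two vertex-disjoint paths); your case analysis (i)--(iii) is in fact more careful than the paper's, and the ``bookkeeping obstacle'' you flag at the end is already discharged by the induction hypothesis, since any postponed leaves of $L_i$ remain leaves of $T'$ and the hypothesis directly asserts that $R_i$ is not a cut vertex of the augmented $T'$. Two minor slips that do not affect the conclusion: degree at least $2$ does not by itself rule out being a cut vertex (the correct reason a leaf $x$ of $T$ is safe is that $T-x$ is still a connected spanning subgraph of the augmented graph minus $x$), and the $\Delta-1$ edges of {\tt star-augment} form a Hamiltonian path on the leaves (a fan), not a cycle (a wheel) --- the fan is still 2-connected.
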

\begin{proof}
If $T$ is star like tree, then clearly the output of {\em star-augment} is 2-connected as it guarantees two vertex disjoint paths between every pair of vertices in $T$.  When $T$ is not a star like tree, we prove our claim using Mathematical Induction on the number of vertices.  The base case of the induction is when $T$ is a star like tree.  For the hypothesis, we assume that {\em star-augment} algorithm outputs a 2-connected graph for all trees of size less than $n$.  Let $T$ be a tree with $n$ vertices.  Since $T$ is not a star, there exists two representative vertices $R_i$ and $R_j$ such that both $R_i$ and $R_j$ are nonempty.  By our approach an edge is added between an element $x$ of $R_i$ to $y$ of $R_j$ and we obtain a recursive sub problem of size $n-2$.  By the induction hypothesis, a tree on $n-2$ vertices can be made biconnected by augmenting a minimum number of edges and let the resulting graph be $G_{n-2}$.  In $G$, clearly between $x$ and $y$ there are two vertex disjoint paths between $x$ and $y$.  i.e., $x$ and $y$ is an edge in $G$ which gives one path and the other path between $x$ and $y$ exists in $T$ itself.  Moreover, we also have two vertex disjoint paths between $x$ and any vertex $z$ in $T$.  One path between $x$ and $z$ is through the neighbour of $x$ and the other path is using the edge $\{x,y\}$.  This completes the induction and hence the graph output by {\em tree-augment} is 2-connected.   \qed 
\end{proof}
\subsection{Implementation of {\tt tree-augment()} and Run-time Analysis}
This section reports a linear-time implementation of the algorithm {\tt tree-augment()}.  The non-trivial subroutine in tree-augment() is {\tt non-star-augment()}.  We now describe the data structures used to implement non-star-augment().  Subsequently, we argue that it yields a linear-time implementation. 
\begin{itemize}
\item We maintain two Hash tables, the Hash table {\em Representative-Hash-table} (H1) contains an entry for each vertex of degree at least 3 in $T$.  Let $V_3$ denote the set of vertices of degree at least 3 in $T$.  Note that not all elements in $V_3$ are representatives at the start of the algorithm.  An element of $V_3$ which is not a representative in the current iteration may become a representative at a later iteration of the algorithm.  In H1, against each entry, we store the set $L_i$ which is a set of leaves associated with the representative $R_i$.   An empty list is attached to those locations which are not representatives in the current iteration.
\item The other Hash table {\em Representative-Degree-Hash-table}(H2) is used to store the degree of the representatives and is useful in retrieving {\em maximum-degree-representative} and {\em second-maximum-degree-representative} efficiently.   The number of entries in H2 is same as the maximum degree of $T$.  In H2, the location $i$ contains the list of representatives of degree $i$ in $T$.  
\item Given a tree $T$, we first call {\em Perform-Path-to-Edge-Contraction()} routine to contract a path of length at least 3 in $T$ into an edge. i.e., let $P_{uv}=\{u=u_1,u_2,\ldots,u_r=v\}$ denote a path in $T$ such that degree of $u$ and $v$ is at least 3 in $T$ and for every other $u_i$ the degree is exactly 2 in $T$.  The routine {\em Perform-Path-to-Edge-Contraction()} replaces $P_{uv}$ with an edge between $u$ and $v$ to get a tree of smaller size.  The standard graph traversal Depth First Search (DFS) can be used to perform the above task in linear time. 
\item Note that both the tables H1 and H2 requires updation during the execution of the algorithm and the updation happens due to the following scenarios.  Let $x$ and $y$ are two representatives with its degrees being maximum and second maximum.  As per our algorithm we add an edge between an element of $L_x$ to an element of the set $L_y$, where $L_x$ and $L_y$ are the leaves associated with the representative $x$ and $y$, respectively.  Moreover, we also remove $x$ and $y$ from $T$ to get a tree for the next iteration.  In fact we add $t$ edges sequentially as mentioned in the algorithm and remove $2t$ leaves from $T$.  Due to the removal of leaves, the degree of the representatives reduces and when it becomes two it is no longer a representative as per our definition.  i.e., it has exactly one leaf associated with it.  Let $R_i$ be the representative whose degree becomes two during the run of the algorithm.  We observe that the parent of $R_i$ ({\em parent($R_i$)}) in $T$ is a vertex of degree at least 3 and $L_i$ is merged with the set associated with {\em parent($R_i$)} in H1.  This update incurs $O(1)$-time and it is done at most once for each $R_i$.  Since there are $O(n)$ representatives, the above step can be done in $O(n)$ time.
\item When the {\em parent($R_i$)} becomes a representative, we need to make an entry in H1 if it does not exist.  For a representative $R_i$, during the run of the algorithm, it may be the case that the set $L_i$ becomes empty due to edge additions and the degree of $R_i$ is at least 3.  In such a case $R_i$ is no longer a representative and is removed from H1.  
\item Using H2 we update the maximum degree and second maximum degree for subsequent iterations.  For example, for the first iteration of the algorithm $x$ and $y$ are the vertices with its degree maximum and second maximum.  For the second iteration, it may be the case that there are vertices $w$ and $z$ in the location $\Delta(T)$ (location of $x$) of H2.  If not, we search for $w$ in $\Delta(T)-1$, $\Delta(T)-2$, and so on till we find a non-empty location.  Similar strategy is adopted to find the representative with second maximum degree for the next iteration.  Clearly, the time spent for the above operation over all iterations of the algorithm  is $O(n)$.  Therefore, the {\tt tree-augment()} runs in time linear in the input size.
\end{itemize}
\section{Parallel Biconnectivity Augmentation: A Novel Approach}
We present a parallel biconnectivity augmentation algorithm using $O(\Delta)$ processors with CREW strategy.  Our parallel algorithm essentially follows the overall structure of the sequential algorithm.  Moreover, it is clear that the parallel algorithm augments $max(\lceil\frac{l}{2}\rceil, \Delta(T)-1)$ and the resulting graph is 2-connected.
\begin{algorithm}[h]
\caption{Parallel Biconnectivity Augmentation in Trees {\em Parallel-tree-augment(Tree T)}} \label{tree-alg5}
\begin{algorithmic}[1]
\STATE{Perform Parallel Edge Contraction using $\Delta$ processors.  Root the tree at a maximum degree node, say $v$.  Let $\Delta$ processors in parallel explore paths of type $P1$ or $P2$ to be contracted into edges}
\STATE{If the resulting tree after edge contraction routine is a star, then augment $\Delta(T)-1$ edges in parallel using $\Delta$ processors}
\STATE{Otherwise, $T$ is a non-star.  Follow the steps of Algorithm \ref {tree-alg3} to find $t$ pairs of vertices to be augmented.  Since $t \leq \Delta(T)$, use $\Delta$ processors to augment $t$ edges in parallel}
\STATE{Also, remove the pairs for which augmentation is done to get a tree for the next iteration. Update the sets $R$ and $D$ for the next iteration.}
\end{algorithmic}
\end{algorithm}
\subsection{Trace of the Parallel Algorithm (Algorithm \ref{tree-alg5})}
We trace the steps of  Algorithm \ref{tree-alg5} in the Figure \ref{fig:trace}. 
\begin{figure}[htpb]
\begin{center}
\includegraphics[angle=0, scale=0.6]{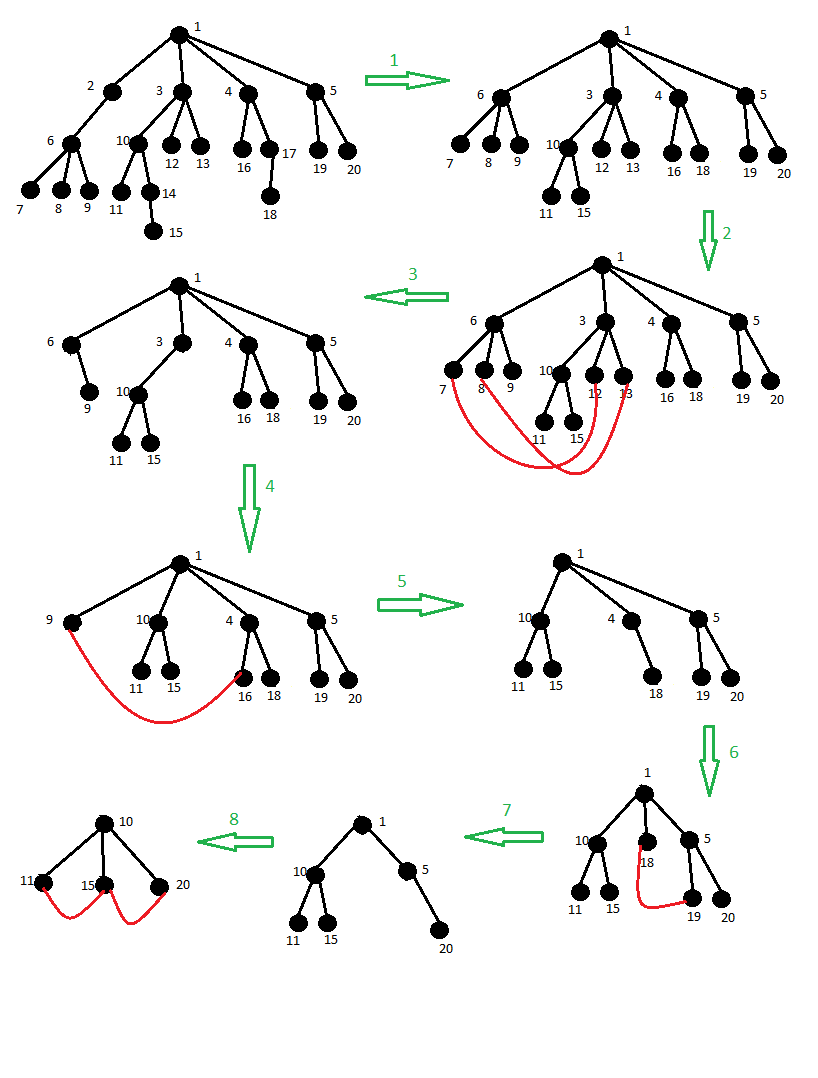}
\caption {Trace of Parallel Bi-connectivity Augmentation Algorithm}   \label{fig:trace}
\end{center}
\end{figure}
\begin{enumerate}
\item We first perform path-to-edge contraction on the input tree.  For example, the path $P_{1~6}$ gets contracted to the edge $\{1,6\}$. Similarly paths $P_{10~15}$ and $P_{4~18}$ gets contracted to respective edges.
\item The representative set is $\{3, 4, 5, 6, 10\}$ with degrees $\{4, 3, 3, 4, 3\}$, respectively.  For the current iteration, nodes 3 and 6 are the candidate representatives as its degree are maximum and second maximum.  The number of leaf nodes in node 3 is 2 and the number of leaf nodes in node 6 is 3.  So add 2 edges $\{7,12\}$ and  $\{8,13\}$.
\item Delete the nodes 7, 12, 8 and 13 from the tree.
\item If required, perform path-to-edge contraction.  Update the sets $R$ and $D$.  The representative nodes are 1, 4, 5 and 10 with degrees 4, 3, 3, 3, respectively.  The maximum degree node is 1 with degree 4 and the number of leaf nodes associated with it is 1.  The second maximum degree node is 4 with degree 3 and the number of leaf nodes is 2. So, 1 edge gets added from 9 to 16.
\item Nodes 5 and 16 gets deleted.
\item Contraction takes place converting path to a single edge.  The representative nodes are 1, 5, 10 with degree 3 each.  Maximum degree node is 1 with the number of leaf nodes as 1 and the second maximum degree node is 5 with the number of leaf nodes as 2. So, 1 edge gets added from node 18 to node 19.
\item Node 18 and 19 gets deleted.
\item The resultant tree gets contracted. The path joining 10 to 15 gets contracted to a single edge. The resultant tree is a star tree with node 10 of degree 3 with leaf nodes as 11, 15 and 20.  In star-augment, 2 edges, namely $\{11,15\}$ and $\{11,20\}$ are added to the tree.  The algorithm is complete and the resulting 2-connected graph is shown in the  Figure ~\ref{fig:out}. 
\end{enumerate}
\begin{figure}[htpb]
\begin{center}
\includegraphics[angle=0, scale=0.7]{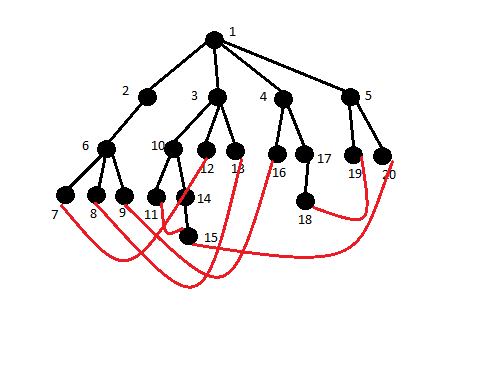}
\caption {The Resulting 2-connected Graph}   \label{fig:out}
\end{center}
\end{figure}
\begin{lemma}
Parallel biconnectivity augmentation algorithm is optimal with respect to $\Delta$ processors. 
\end{lemma}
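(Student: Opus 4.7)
The plan is to verify the definition of optimality stated in the preliminaries, namely that the processor-time product of the parallel algorithm equals the sequential lower bound. The excerpt already establishes that biconnectivity augmentation has a sequential lower bound of $\Omega(n)$, so it suffices to show that with $\Delta$ processors the parallel algorithm terminates in $O(n/\Delta)$ time per processor. I would structure the argument by walking through the three phases of Algorithm~\ref{tree-alg5}: parallel path-to-edge contraction, the star-augment base case, and the non-star-augment iteration.

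For the edge contraction phase, I would root the tree at a vertex $v$ of maximum degree $\Delta$ and assign one processor to each of the $\Delta$ subtrees rooted at the children of $v$. Each processor performs a DFS restricted to its subtree, identifying paths of type $P1$ and $P2$ and contracting them. The key observation is that these subtrees partition $V(T) \setminus \{v\}$, so the total work is $O(n)$; by a standard load-balancing/charging argument, the worst-case per-processor time is $O(n/\Delta)$. For the star-augment base case the analysis is immediate: $\Delta - 1$ edges must be inserted between leaves, and a single pass over $\Delta$ processors inserts them in $O(1)$ time each.

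For the non-star-augment phase I would argue, iteration by iteration, that the steps of Algorithm~\ref{tree-alg3} map cleanly onto the hash tables H1 and H2 described in the sequential implementation. At each iteration, finding the maximum-degree and second-maximum-degree representatives from H2 can be done by dedicating one processor per degree bucket, and the $t \le \Delta$ edge insertions and leaf deletions are trivially parallelizable using $\Delta$ processors. The updates to H1 (merging $L_i$ into \emph{parent}$(R_i)$ when $R_i$ loses its representative status, removing $R_i$ when $L_i$ becomes empty) and to H2 (migrating representatives between degree buckets) are charged to the vertices being removed; summed across all iterations, these updates do $O(n)$ total work, and hence $O(n/\Delta)$ amortized work per processor.

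The main obstacle is the amortized accounting across iterations. Individual iterations can be unbalanced --- some may process only a handful of leaves while others contract long paths or update many buckets --- so I cannot simply multiply a worst-case per-iteration bound by the number of iterations. The hard part will be exhibiting a clean charging scheme (each vertex, once removed or contracted, contributes $O(1)$ total work across all iterations and processors) that cumulatively gives $O(n)$ work distributed over $\Delta$ processors. Once this is in place, the processor-time product becomes $\Delta \cdot O(n/\Delta) = O(n)$, which matches the sequential lower bound and establishes optimality.
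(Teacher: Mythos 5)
Your proposal follows essentially the same route as the paper: a phase-by-phase cost analysis of edge contraction, the star base case, and non-star-augment, concluding that the total work is $O(n)$ distributed over $\Delta$ processors so the processor-time product matches the sequential lower bound. The one point worth noting is that the amortization difficulty you flag (unbalanced iterations in non-star-augment) is something the paper resolves by assertion --- it simply states that each iteration augments at most $\Delta$ edges and the routine is called $O(\frac{n}{\Delta})$ times --- so your instinct that a careful charging argument is needed there is, if anything, more rigorous than the published proof.
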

\begin{proof}
We analyze the cost incurred for each step in the algorithm.  
\begin{itemize}
\item {\bf Edge Contraction:} For a given input tree, our path-to-edge contraction routine will be called at most $(n-\Delta -1)$ times.  Since edge contraction is a constant time effort, the time complexity for parallel edge contraction routine is $O(n-\Delta-1)$, which is $O(n)$.
\item If $\Delta$ processors are available.  These edges can be divided equally among $\Delta$ processors. Hence, the processor time product is $O(\Delta * (n-\Delta-1)/\Delta)$
\item {\bf Parallel non-star-augment:} Hash tables $H1$ and $H2$ are populated using a single processor following which augmentation takes place in parallel. The $t$ edges determined by our algorithm are added in parallel.  Since $t \leq \Delta$, each iteration augments at most $\Delta$ edges and this routine is called at most $O(\frac{n}{\Delta})$ times.  Hence, the processor-time product is $O(\Delta \times \frac{n}{\Delta})= O(n)$.  Over all iterations, the cost incurred in updating $H1$ and $H2$ is $O(n)$.
\item If the tree becomes a star-like tree, then we need to augment at most $\Delta$ edges which can be done using $\Delta$ processors.
\item Therefore, the overall time complexity is $O(n)$ with respect to $\Delta$ processors.
\end{itemize}
\end{proof}
\subsection{Simulation of Parallel Biconnectivity Augmentation Algorithm}
In this section, we present our simulation results of Algorithm \ref{tree-alg5}.  The algorithm has been simulated using Dual core processor and simulation time for various input trees is shown in the Figure \ref{simulation}.
\begin{figure}[htpb]
\begin{center}
\includegraphics[angle=0, scale=0.65]{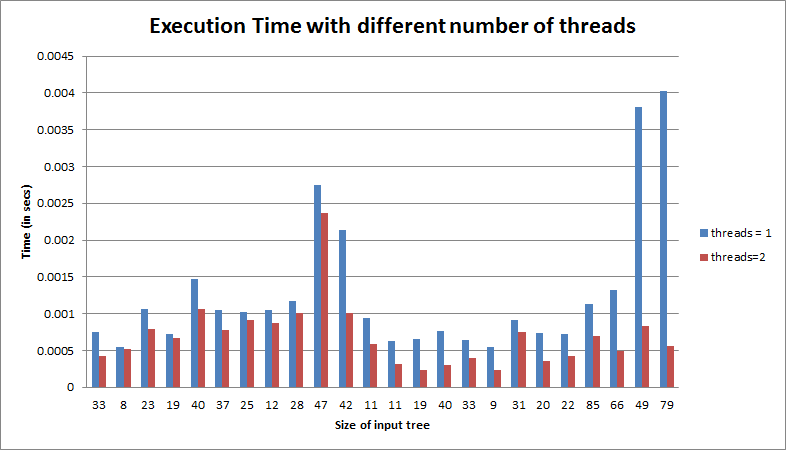}
\caption{Comparison between Execution times of Sequential and Parallel Augmentation Algorithm} \label{simulation}
\end{center}
\end{figure}
\subsection{Applications}
In \cite{nsn}, the biconnectivity augmentation of 1-connected graphs is reported using the biconnected component tree (bcc-tree) and triconnectivity augmentation of 2-connected graphs is reported using the 3-block tree.  We highlight that using bcc-tree, our proposed framework gives both sequential and parallel algorithm for biconnectivity augmentation of 1-connected graphs.  Similarly, our framework yields parallel triconnectivity augmentation of 2-connected graphs with the help of 3-block trees.
\section{Conclusion and Future Directions}
In this paper, we have presented a new perspective of biconnectivity augmentation algorithm using edge contraction tool.  Our approach simplifies the results reported in \cite{nsn}.  Using our new sequential approach, we have also presented a parallel biconnectivity augmentation algorithm for trees.  Our parallel algorithm is optimal with respect to $\Delta$ processors, where $\Delta$ is the maximum degree of the tree.   Since the edge contraction preserves all connectivity information,  we believe that this tool may be of use in other combinatorial problems.  An interesting direction for further research is to use this framework to solve general connectivity augmentation problem which asks for increasing the vertex connectivity to $r \geq 2$, given a graph with vertex connectivity $k \geq 1$.

\end{document}